\renewcommand{\frac}{\dfrac}
\newcommand{\K}{{\cal K}}
\newcommand{\N}{{\cal N}}
\newcommand{\A}{{\cal A}}
\renewcommand{\S}{\cal S}
\newcommand{\SI}{\mbox{SNR}}
\newcommand{\erhao}{\fontsize{23pt}{\baselineskip}\selectfont}
\newtheorem{dingli}{Theorem~}%[section]
\newenvironment{proof}[1][Proof]{\begin{trivlist}
\item[\hskip \labelsep {\bfseries #1}]}{\end{trivlist}}
\begin{document}
  \title{\erhao{Complexity Analysis of Joint Subcarrier and Power Allocation for the Cellular Downlink OFDMA System}
  \author{Ya-Feng Liu}
  \thanks{This work was supported by the National Natural Science Foundation, Grants 11331012 and 11301516.}
 \thanks{Y.-F.~Liu is with the State Key Laboratory
of Scientific and Engineering Computing, Institute of Computational
Mathematics and Scientific/Engineering Computing, Academy of
Mathematics and Systems Science, Chinese Academy of Sciences,
Beijing, 100190, China (e-mail:
{{yafliu}@lsec.cc.ac.cn}).}
%\thanks{{The author wishes to thank Dr. Min Sung Kim of Standford University for drawing his attention to problem \eqref{problem}.}}
}
  \maketitle
  %\author{Ya-Feng Liu}

   %\maketitle
   \begin{abstract}
     \boldmath Consider the cellular downlink Orthogonal Frequency Division Multiple Access (OFDMA) system where a single transmitter transmits signals to multiple receivers on multiple discrete subcarriers. To adapt fast channel fluctuations, the transmitter should be able to dynamically allocate subcarrier and power resources. Assuming perfect channel knowledge, we formulate the joint subcarrier and power allocation problem as two optimization problems: the first is the one of minimizing the total transmission power subject to quality of service constraints, and the second is the one of maximizing a system utility function subject to power budget constraints. In this letter, we show that both the aforementioned formulations of the joint subcarrier and power allocation problem are generally NP-hard. We also identify several subclasses of the problem which are polynomial time solvable.

 % per subcarrier.~%, for the joint subcarrier and power allocation problem are considered. %: the total transmission power and the second is to maximize some system utility function.
 %In spite of the existence of various heuristics approaches, little is known about the computational complexity status of the above problem. This paper aims at filling this {theoretical} gap, i.e.,
 %The goal of this letter is to characterize the computational complexity of the joint allocation problem for the cellular downlink OFDMA system.

 %These complexity results suggest that we should abandon efforts to find globally optimal subcarrier and power allocation strategy for a general cellular downlink OFDMA system unless some special cases.

 %suggest that there are not polynomial time algorithms which can solve the general joint subcarrier and power allocation problem to global optimality (unless P$=$NP), and determining an approximately optimal subcarrier and power allocation strategy is more realistic in practice.

 %provide valuable information to algorithm designers in directing their efforts toward those approaches that have the greatest potential of leading to useful algorithms.

 %

  % to minimize the total transmission power subject to quality of service constraints and the OFDMA constraint, and the second is to maximize some system utility function (including the sum-rate utility, the proportional fairness utility, the harmonic mean utility, and the min-rate utility) subject to the total transmission power constraint per user and the OFDMA constraint.
     \end{abstract}
\begin{keywords}
Computational complexity, cellular downlink OFDMA system, power
control, subcarrier allocation.
    %Lagrangian relaxation.
\end{keywords}

   \section{Introduction}
Orthogonal Frequency Division Multiple Access (OFDMA) is a form of multi-carrier transmission and is well suited for frequency selective
channels and high data rates. To adapt channel fluctuations in space and time and improve the overall system's throughput, OFDMA based systems should be equipped with {dynamic} subcarrier and power allocation algorithms. Recently, various heuristics approaches have been proposed for the joint subcarrier and power allocation problem for the OFDMA system {\cite{dynamic,wong,broadcast,JSAC,outage,personal,newe,sub-optimal,newh}}. However, none of them claimed that they could solve the problem (except some special cases) to global optimality in polynomial time and the reasons are often attributed to nonconvexity of the problem. %It is well known that convex optimization problems are relatively easy
%to solve, provided that there is a fast way to evaluate the
%objective function and its subgradient and to determine the
%feasibility of a candidate solution. More precisely, for any convex
%optimization problem and any $\epsilon>0$, the so-called ellipsoid
%algorithm \cite{com3} can be used to find an $\epsilon$-optimal
%solution (i.e., a feasible solution whose objective value is within
%$\epsilon$ from being globally optimal) with a complexity that is
%polynomial in the problem dimension and $\log(1/\epsilon)$.
%In general, nonconvex optimization problems are generally difficult to
%solve.
However, not all nonconvex problems are hard to solve since the lack of convexity may be due to an inappropriate
formulation, and many nonconvex optimization problems indeed admit a convex reformulation; some examples can be found in \cite{complexity,precoding,simo,coordinated,Coordition,RQ}. In contrast to nonconvexity, computational complexity theory\cite{Complexitybook} can characterize inherent tractability of an optimization problem. The goal of this letter is to analyze the computational complexity of the joint subcarrier and power allocation problem for the cellular downlink OFDMA system.%, which will be introduced later.%, which

The computational complexity of the dynamic spectrum management problem for the interference channel (IC) has been extensively studied in \cite{complexity}. It is shown there that the dynamic spectrum management problem is NP-hard when the number of subcarriers is greater than two, or when the number of users is greater than one. For the OFDMA system,  the reference \cite{hayashi} showed that the sum-rate maximization problem is NP-hard. %To the best of our knowledge, this is the first and the only one complexity result for the OFDMA system for a long time.
Very recently, \cite{multi-user} provided a systematic characterization of the computational complexity status of the joint subcarrier and power allocation problem for the {multi-user OFDMA system}, {where multiple transmitters transmit signals to multiple receivers on multiple subcarriers and different transmitters are not allowed to share transmission power}. %They proved that the system utility maximization problems (including the sum-rate utility, the proportional fairness utility, the harmonic mean utility, and the min-rate utility) and the total power minimization problem for the multi-user OFDMA system are all strongly NP-hard.

%It is also shown in \cite{complexity,hayashi} that the sum-rate maximization problem for the two-user OFDMA system is NP-hard. An interesting result shown in \cite{hayashi} is that the optimal solution of the two-user \emph{sum-rate} maximization problem is automatically OFDMA if the crosstalk channel gains of the two users on each subcarrier are large enough. The recent work
%
%
%The reference \cite{multi-user} is among the first paper which characterizes the computational complexity status of the joint subcarrier and power allocation problem for the \emph{multi-user OFDMA system}. It is shown there that the joint subcarrier and power allocation problem is generally strongly NP-hard.
%Most of the complexity results in \cite{complexity} are obtained without consider
% \emph{without the OFDMA constraint}
%$\|X_{k+1}\nabla H(x^{k+1})+\lambda p(x^{k+1})^{p}\|_{\infty}
%\leq L\|d_k\|_{\infty}^2$, %which is of the order O(\|d_k\|_{\infty}^2);
%$X_{k+1}\nabla^2f(x^{k+1})X_{k+1}\succeq-\frac{1}{\vartheta}\|d_{k}\|_{\infty}I_n$
In this letter, we focus on the characterization of the computational complexity of the joint subcarrier and power allocation problem for the {cellular downlink OFDMA system}. We consider two formulations of the problem: the total power minimization formulation and the system utility maximization formulation. The contributions of this letter are twofold. First, we show that both formulations of the problem are generally NP-hard. The NP-hardness results suggest that for a given cellular downlink OFDMA system, finding the optimal subcarrier and power allocation strategy is computationally intractable. %Thus, one should not insist on the search for an efficient and exact algorithm for solving the problem.
%one has to concentrate on other, less ambitious, approaches. such as finding high quality approximate solutions or locally optimal solutions of the problem in polynomial time.
Second, we identify several subclasses of the problem which can be solved in polynomial time. %We therefore clearly delineate the set of computationally tractable problems within the general class of NP-hard joint subcarrier and power allocation problems.
In particular, we show that the sum-rate maximization problem for the cellular downlink OFDMA system is polynomial time solvable. This result is in sharp contrast to the ones in \cite{complexity,hayashi,multi-user,add2}, {where the sum-rate maximization problem is shown to be NP-hard in various different scenarios.}
%\footnote{Note that the NP-hardness results of the multi-user OFDMA system in \cite{multi-user} do not imply the ones of the cellular downlink OFDMA system here.}
%

%In particular, we consider two formulations of the joint subcarrier and power allocation problem. The first one is the problem of minimizing the total transmission power in the system subject to all users' quality of service constraints, all users' power budget constraints per subcarrier, and the OFDMA constraint. The second one is the problem of maximizing the system utility (including the sum-rate utility, the proportional fairness utility, the harmonic mean utility, and the min-rate utility) while the total transmission power constraint of each user, individual power constraints on each subcarrier, and the OFDMA constraint are respected.
\section{System Model and Problem Formulation}

{In this section, we introduce the system model and problem formulation.} Consider a cellular downlink OFDMA system, where a single transmitter (base station) transmits signals to $K$ receivers on $N$ subcarriers. Throughout this letter, we assume that $N\geq K$; i.e., the number of subcarriers is greater than or equal to the number of receivers. %Otherwise, the OFDMA constraint is infeasible.

Denote the set of receivers and the set of subcarriers by $\K=\left\{1,2,...,K\right\}$ and $\N=\left\{1,2,...,N\right\}$, respectively. For any $k\in\K$ and $n\in\N$, suppose $s_k^n$ to be the {complex} symbol that the transmitter wishes to send to receiver $k$ on subcarrier $n$, then the received signal $\hat s_k^n$ at receiver $k$ on subcarrier $n$ can be expressed by
${\hat s_k^n=h_{k}^n s_k^n + z_k^n,}$ where $h_{k}^n$ is the {complex} channel coefficient from the transmitter to receiver $k$ on subcarrier $n$ and $z_k^n$ is the {complex} additive white Gaussian
noise with distribution $\cal{CN}$$({0}, \eta_k^n).$
Denoting the power of $s_k^n$ by $p_k^n$; i.e., $p_k^n:=|s_k^n|^2$, the received power at receiver $k$ on subcarrier $n$ is given by
${\alpha_{k}^np_k^n+\eta_k^n},k\in\K,n\in\N,$ where $\alpha_{k}^n:=|h_{k}^n|^2$ %$(k,~j=1,2,...,K,~n\in\N)$
stands for the channel gain from the transmitter to receiver $k$ on subcarrier $n.$ Then, we can write the achievable data rate $R_k$ of receiver $k$ as %\cite{cover}
\begin{equation*}\label{rk}%R_{k}=\sum_{n\in\N}{\log_2}\left(1+\SI_k^n\right),
{R_k=\sum_{n\in\N}\log_2\left(1+\SI_k^n\right)=\sum_{n\in\N}\log_2\left(1+\frac{\alpha_{k}^np_k^{n}}{\eta_k^n}\right)},~k\in\K.\end{equation*}
%
%the SNR of receiver $k$ on subcarrier $n$ as
%$${\SI_k^n=\frac{\alpha_{k}^np_k^{n}}{\eta_k^n},~k\in\K,~n\in\N},$$ and
% {(bits/sec)}
%=\sum_{n\in\N}{\log_2}\left(1+\frac{\alpha_{k}^np_k^{n}}{\eta_k^n+\sum_{j\neq k} \alpha_{j}^np_j^{n}}\right)

%\footnote{The channel gains can be obtained through either feedback or reverse link estimation.}

In this letter, {we assume that the transmitter knows all channel gains (through either feeback or reverse link estimation)}.
We consider the joint subcarrier and power allocation problem for the cellular downlink OFDMA system:\vspace{-0.2cm}
%. Mathematically, a power allocation vector $\left\{p_k^n\right\}$ is said to satisfy the OFDMA property if the following equations hold true: %$p_k^np_j^n=0,~\forall~j\neq k,{~k,\,j\in\K,}~n\in\N.$
%  \begin{equation*}\label{ofdma}p_k^np_j^n=0,~\forall~j\neq k,{~k,\,j\in\K,}~n\in\N.\end{equation*}
%  The above equations require that the transmitter is allowed to transmit signals to at most one receiver on each subcarrier. Thus, the joint subcarrier and power allocation problem for the cellular downlink OFDMA system can be formulated as:
  \begin{subequations}\label{problem}
  %\begin{equation}
 \begin{eqnarray}%{cl}
 &\displaystyle\min_{\{p_k^n\}} & \displaystyle \sum_{k\in\K}\sum_{n\in\N}p_k^{n} \label{objective}\\%[13pt]
&\text{s.t.} & R_{k}\geq \gamma_{k},~k\in\K, \label{qos}\\%[5pt]
        && P^n\geq p_k^n\geq 0,~k\in\K,~n\in\N,\label{subcarriercons}\\%[5pt]
    && p_k^np_j^n=0,~\forall~j\neq k,{~k,\,j\in\K,}~n\in\N,\label{ofdma}%\label{cons}
    \end{eqnarray}%\end{equation}
\end{subequations}
  %\begin{equation}\label{problem}
% \begin{array}{cl}
% \displaystyle\min_{\{p_k^n\}} & \displaystyle \sum_{k\in\K}\sum_{n\in\N}p_k^{n}\\%[13pt]
%\text{s.t.} & R_{k}\geq \gamma_{k},~k\in\K, \\%[5pt]
%        & P^n\geq p_k^n\geq 0,~k\in\K,~n\in\N,\\%[5pt]
%    & p_k^np_j^n=0,~\forall~j\neq k,{~k,\,j\in\K,}~n\in\N,%\label{cons}
%    \end{array}%\end{equation}
%\end{equation}
where the objective function in \eqref{objective} is the total transmission power, $\gamma_k>0$ in \eqref{qos} is the desired transmission rate target of receiver $k,$ $P^n$ in \eqref{subcarriercons} is the transmission power budget on subcarrier $n$, and the last OFDMA constraint \eqref{ofdma} requires that the transmitter is allowed to transmit signals to at most one receiver on each subcarrier.

Besides the above total transmission power minimization problem, we also consider the system utility maximization problem, which can be expressed by
\begin{equation}\label{utility}
 \begin{array}{cl}
\displaystyle \max_{\{p_k^n\}} & \displaystyle H(R_1,R_2,...,R_K) \\
\text{s.t.}  & \displaystyle \eqref{subcarriercons}, \eqref{ofdma},~\text{and}~\sum_{k\in\K}\sum_{n\in\N}p_k^{n}\leq P, \\
    %& P^n\geq p_k^n\geq 0,~k\in\K,~n\in\N,\\
%    & p_k^np_j^n=0,~\forall~j\neq k,{~k,\,j\in\K,}~n\in\N,%\label{cons}
    \end{array}
\end{equation}where $P$ is the power budget of the transmitter, and $H(R_1,R_2,...,R_K)$ denotes the system utility function. Four popular system utility functions are:
 %sum-rate utility $H_1,$ proportional fairness utility $H_2,$ harmonic mean utility $H_3,$ and min-rate utility $H_4:$\begin{align}
%  H_1(R_1,R_2,...,R_K)&={{\sum_{k=1}^K{R_k}}}/{K},\\
%  H_2(R_1,R_2,...,R_K)&=\left(\prod_{k=1}^KR_k\right)^{1/K},\\
%  H_3(R_1,R_2,...,R_K)&=K/\left(\sum_{k=1}^K R_k^{-1}\right),\\
%  H_4(R_1,R_2,...,R_K)&=\min_{1\leq k\leq
%K}\left\{R_k\right\}.
%\end{align}
\begin{itemize}
\item [-] {Sum-rate utility:}
$H_1={{\sum_{k=1}^K{R_k}}}/{K},$
\item [-] {Proportional fairness utility:} $H_2=\left(\prod_{k=1}^KR_k\right)^{1/K},$\\[3pt]%\Leftrightarrow
%\dfrac{1}{K}\sum_{k=1}^K\log R_k.$
\item [-] {Harmonic mean utility:}
$H_3=K/\left(\sum_{k=1}^K R_k^{-1}\right),$\\[-3pt]
\item [-] {Min-rate utility:} $H_4=\displaystyle \min_{1\leq k\leq
K}\left\{R_k\right\}.$
\end{itemize}
%It is simple to see that \begin{align*}&H_1(R_1,R_2,...,R_K)\geq H_2(R_1,R_2,...,R_K)\\
%\geq~& H_3(R_1,R_2,...,R_K)\geq H_4(R_1,R_2,...,R_K),\end{align*} and the equality holds true if and only if $R_1=R_2=...=R_K.$

%Due to the OFDMA constraint \eqref{ofdma}, we know, for any $k\in\K$ and $n\in\N,$
% $$\SI_k^n=\frac{\alpha_{k}^np_k^{n}}{\eta_k^n+\sum_{j\neq k} \alpha_{j}^np_j^{n}}=\frac{\alpha_{k}^np_k^{n}}{\eta_k^n}.$$ In fact, if $p_k^n=0$, the above equality holds trivially; while if $p_k^n> 0,$ it follows from the OFDMA constraint %$p_k^np_j^n=0~(\forall j\neq k)$
%that $p_j^{n}=0~(\forall~j\neq k)$ and thus $\sum_{j\neq k} \alpha_{j}^np_j^{n}=0,$ which shows that the above equality holds
%as well. Thus, $R_k$ in problems \eqref{problem} and \eqref{utility} can be rewritten as
%\begin{equation}\label{rate}R_k=\sum_{n\in\N}\log_2\left(1+\frac{\alpha_{k}^np_k^{n}}{\eta_k^n}\right).\end{equation}

\section{Complexity Analysis}% of Joint Subcarrier and Power Allocation Problems \eqref{problem} and \eqref{utility}

  In this section, we shall investigate the computational complexity of problems \eqref{problem} and \eqref{utility}. We shall first show in Subsection \ref{sub1} that both problem \eqref{problem} and problem \eqref{utility} with $H=H_2,H_3,H_4$ are NP-hard when the ratio of the number of subcarriers and the number of users, that is $N/K$, is equal to any constant $c>1$. Then, we shall identify some easy subclasses of problems \eqref{problem} and \eqref{utility} which are polynomial time solvable in Subsection \ref{sub2}. {In particular, we shall show that problem \eqref{utility} with $H=H_1$ is polynomial time solvable.} %This polynomial time solvability result is in sharp contrast to the multi-user OFDMA case for which the sum-rate maximization problem is strongly NP-hard.

  %show the strong NP-hardness of problem \eqref{problem} for the case where $N/K=c>1.$

  %\subsection{A Brief Introduction to Computational Complexity Theory}\label{sub-i}
%The standard way to prove an optimization problem is NP-hard is to establish the NP-hardness of its corresponding feasibility problem or decision problem. The latter is the problem to decide if the global minimum (maximum) of the optimization problem is below (above) a given threshold or not. To show a decision problem $\mathcal{P}_2$ is NP-hard,
%we usually follow three steps: 1) choose a suitable NP-complete decision problem $\mathcal{P}_1;$ 2) construct a {polynomial
%time} transformation from any instance of $\mathcal{P}_1$ to an instance of $\mathcal{P}_2;$
%3) prove under this transformation that any instance of problem
%  $\mathcal{P}_1$ is true if and only if the constructed instance of problem $\mathcal{P}_2$ is true. See \cite{Complexitybook,combinatorial,combook2,combook3} for more on computational complexity theory.

\subsection{Hard Cases}\label{sub1}
 We first show the NP-hardness of problem \eqref{problem}. To do this, we consider its feasibility problem. If the feasibility problem
is NP-hard, so is the original optimization problem. The NP-hardness proof of the feasibility problem of \eqref{problem} is based on a polynomial time transformation from the 3-partition problem.

%
%  {\bf Feasibility Problem of \eqref{problem}}. \ Given a set of transmission rate levels $\gamma_k,$ power budgets per subcarrier $P^n$, direct-link channel gains $\alpha_{k}^n,$ and noise powers $\eta_k^n,$ check whether there exists a subcarrier and power allocation strategy such that
%\begin{equation}\label{problem2}\left\{\!\!\!\!
%\begin{array}{cl}
%    & \displaystyle \sum_{n\in\N}\log_2\left(1+\frac{\alpha_{k}^np_k^{n}}{\eta_k^n}\right)\geq \gamma_{k},~k\in\K, \\%[15pt]
%    %& \sum_{n=1}^Np_k^{n}\leq p_{k}^{\max},~k=1,2,..,K \\
%    %& \sum_{n=1}^Np_k^{n}\leq \bar p_{k},~k=1,2,..,K, \\[10pt]
%    & P^n\geq p_k^n\geq 0,~k\in\K,~n\in\N,\\%[5pt]
%    & p_k^np_j^n=0,~\forall~j\neq k,{~k,\,j\in\K,}~n\in\N.
%    \end{array}\right.
%\end{equation}

 % \textbf{$3$-partition Problem\cite{Complexitybook}.} Given a finite set $\A=\left\{a_1,a_2,...,a_{3K}\right\}$ with $|\A|=3K$ and a bound $B\in\mathbb{Z}^+,$ such that each $a_n$ satisfies $B/4<a_n<B/2$ and such that $\sum_{n=1}^{3K}a_n=KB,$ the problem is to ask whether the set $\A$ can be partitioned into $K$ disjoint sets ${\cal S}_1,~{\cal S}_2,...,{\cal S}_K$ such that%, for each $1\leq k\leq K,$
%  \begin{equation}\label{partition}\sum_{n\in{\cal S}_k}a_n=B,~k=1,2,\ldots,K.\end{equation}
\begin{dingli}\label{thm-power}
  Given any constant $c>1,$ checking the feasibility of problem \eqref{problem} is NP-hard when $N/K=c$. Thus, problem \eqref{problem} is also NP-hard when $N/K=c$.
\end{dingli}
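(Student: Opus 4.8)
The plan is to prove NP-hardness of the feasibility problem of \eqref{problem} by a polynomial-time reduction from the strongly NP-complete 3-partition problem. Recall that an instance of 3-partition consists of $3m$ positive integers $a_1,\dots,a_{3m}$ together with a bound $B$ satisfying $\sum_{i=1}^{3m}a_i=mB$ and $B/4<a_i<B/2$ for every $i$; the question is whether the index set can be split into $m$ triples each summing to exactly $B$. Because 3-partition is strongly NP-complete, I may assume each $a_i$ is bounded by a polynomial in $m$, which is what will keep every number in the constructed OFDMA instance of polynomial size.

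I would first treat the representative ratio $N/K=3$, taking $K=m$ receivers and $N=3m$ subcarriers, and let the subcarriers play the role of the integers and the receivers the role of the triples. Concretely, set $\eta_k^n=1$, $P^n=1$, and $\alpha_k^n=2^{a_n}-1$ for all $k\in\K,\,n\in\N$, and impose the common rate target $\gamma_k=B$. With this choice the largest rate subcarrier $n$ can deliver, attained at full power $p_k^n=P^n=1$, is exactly $\log_2\!\left(1+\alpha_k^n P^n/\eta_k^n\right)=a_n$; hence if receiver $k$ is assigned the subcarrier set $S_k$ (the OFDMA constraint \eqref{ofdma} forces the $S_k$ to be disjoint), its achievable rate obeys $R_k\le\sum_{n\in S_k}a_n$. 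Since each $a_n$ is a polynomially bounded integer, the integer $2^{a_n}-1$ has polynomially many bits, so the reduction is polynomial.

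The heart of the argument is the equivalence between feasibility and solvability of the 3-partition instance. If a 3-partition exists, assigning each receiver the three subcarriers of its triple and transmitting at full power gives $R_k=\sum_{n\in S_k}a_n=B=\gamma_k$, so the instance is feasible. Conversely, in any feasible allocation every receiver must satisfy $\sum_{n\in S_k}a_n\ge R_k\ge B$; since $a_n<B/2$ for all $n$, this forces $|S_k|\ge 3$, and because the $S_k$ are disjoint subsets of the $3m$ subcarriers we obtain $3m\ge\sum_k|S_k|\ge 3m$. Thus $|S_k|=3$ for every $k$, all subcarriers are used, and $\sum_k\sum_{n\in S_k}a_n=mB$ together with each partial sum being $\ge B$ forces every triple to sum to exactly $B$, i.e. a valid 3-partition. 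This counting step, which exploits the two-sided bound $B/4<a_i<B/2$ to upgrade a subset-sum requirement into the exact triple structure, is the crux of the proof and the step I expect to be the main obstacle; the reverse implication in particular must be argued carefully so that ``at least three'' becomes ``exactly three.''

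Finally, to realize an arbitrary rational ratio $c=u/v>1$ in lowest terms, I would pad the instance with dummy receivers and dummy subcarriers whose channels are decoupled from the real ones, setting $\alpha_k^n=0$ whenever exactly one of $k,n$ is a dummy, so that real receivers can be served only by real subcarriers and conversely. Taking $K=vs$ receivers and $N=us$ subcarriers for a suitable integer $s=O(m)$ yields the exact ratio $N/K=c$; the numbers $D=vs-m$ of dummy receivers and $E=us-3m$ of dummy subcarriers are then nonnegative with $E\ge D$ once $s$ exceeds a fixed constant multiple of $m$, so the whole construction stays polynomial. Giving each dummy receiver a small target met by a single dedicated dummy subcarrier makes the dummy block always feasible, whence the padded instance is feasible if and only if its real part is, i.e. if and only if the 3-partition instance is solvable. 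Since solving the optimization problem \eqref{problem} would in particular decide feasibility, the NP-hardness of the feasibility problem established above implies that \eqref{problem} itself is NP-hard whenever $N/K=c$.
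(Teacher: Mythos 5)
Your proposal is correct, and its skeleton --- the reduction from 3-partition with $\alpha_k^n=2^{a_n}-1$, $\eta_k^n=1$, $P^n=1$, $\gamma_k=B$, and the full-power assignment for the forward direction --- is exactly the paper's construction. Where you genuinely diverge is the converse direction. The paper introduces an auxiliary sum-rate maximization problem over the same constraint set and argues via two optimality facts (its optimal value is $B$, and attaining it forces total power $3K$), from which it deduces that any feasible allocation must have $\hat p_k^n\in\{0,1\}$, and then reads off the partition from the binary supports. You never touch the power values themselves: you bound $R_k\le\sum_{n\in S_k}a_n$ over the support sets $S_k$, use $a_n<B/2$ to force $|S_k|\ge 3$, and let disjointness of the supports (from the OFDMA constraint \eqref{ofdma}) close the counting argument $3m\ge\sum_k|S_k|\ge 3m$, after which the exact sums $\sum_{n\in S_k}a_n=B$ follow from $\sum_n a_n=mB$. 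Your route is more elementary and self-contained: it replaces the paper's ``one can simply verify'' Facts 1 and 2 with a transparent pigeonhole count, and it sidesteps having to prove that feasible powers are binary. What the paper's route buys in exchange is reuse --- Facts 1 and 2 are invoked again in the proof of Theorem \ref{thm-hard2} to handle the utilities $H_2,H_3,H_4$, so the auxiliary sum-rate problem does double duty across both theorems. Two further points in your favor: you explicitly note that strong NP-completeness of 3-partition is what keeps $2^{a_n}-1$ polynomially sized (the paper leans on strong NP-completeness only implicitly for this encoding issue), and you spell out the dummy-receiver/dummy-subcarrier padding with zero cross-gains needed to realize an arbitrary rational ratio $c=u/v>1$, which the paper relegates to a footnote citing \cite{multi-user}.
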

   \begin{proof}{Without loss of generality, consider the special case of problem \eqref{problem} where $c=3$\footnote{{In fact, any case where $c>1$ can be reduced to the case where $c=3$ by introducing some dummy receivers and/or subcarriers. This can be done in a similar fashion as in \cite{multi-user}.}}}. Given any instance of the $3$-partition problem with $\A=\left\{a_1,a_2,...,a_{3K}\right\}$ and $B\in\mathbb{Z}^+,$ where $B/4<a_n<B/2$ for each $a_n\in\A$ and $\sum_{n=1}^{3K}a_n=KB,$ %the problem is to ask whether the set $\A$ can be partitioned into $K$ disjoint sets ${\cal S}_1,~{\cal S}_2,...,{\cal S}_K$ such that%, for each $1\leq k\leq K,$
%  \begin{equation}\label{partition}\sum_{n\in{\cal S}_k}a_n=B,~k=1,2,\ldots,K.\end{equation}
%
   we construct a cellular downlink OFDMA system where there are $3K$ subcarriers and $K$ receivers. Hence, $\K=\left\{1,2,...,K\right\}$ and $\N=\left\{1,2,...,3K\right\}.$ The power budgets per subcarrier are set to be $P^n=1,~n\in\N;$
     %$k$ is set to be
   %$$\gamma_k=B,~k\in\K,$$ %and the ratio of channel gain and noise power are set to be
   the desired transmission rate of all receivers are set to be $\gamma_k=B,~k\in\K;$
   and the channel gain and the noise power of all receivers on all subcarriers are set to be
   \begin{equation}\label{a}{\alpha_k^n}{}=2^{a_n}-1\geq0,~\eta_k^n=1,~k\in\K,\,n\in\N.\end{equation}%\frac{\alpha^n}{\eta^n}=
   Then the corresponding instance of the feasibility problem is \begin{equation}\label{construct}\left\{\!\!\!\!
\begin{array}{cl}
    & \displaystyle\sum_{n\in\N}\log_2\left(1+{\left(2^{a_n}-1\right)p_k^{n}}\right)\geq B,~k\in\K, \\%[13pt]
    %& \sum_{n=1}^Np_k^{n}\leq p_{k}^{\max},~k=1,2,..,K \\
    %& \sum_{n=1}^Np_k^{n}\leq \bar p_{k},~k=1,2,..,K, \\[10pt]
    & \eqref{ofdma}~\text{and}~1\geq p_k^n\geq 0,~k\in\K,~n\in\N.%[8pt]
        %& p_k^np_j^n=0,~\forall~j\neq k,{~k,\,j\in\K,}~n\in\N.%\label{cons}
    \end{array}\right.
\end{equation} We are going to show that the constructed problem \eqref{construct} is feasible if and only if the answer to the $3$-partition problem is yes, i.e., the set $\A$ can be partitioned into $K$ disjoint sets ${\cal S}_1,~{\cal S}_2,...,{\cal S}_K$ such that %, for each $1\leq k\leq K,$
  \begin{equation}\label{partition}\sum_{n\in{\cal S}_k}a_n=B,~k=1,2,\ldots,K.\end{equation}

%there exists an arrangement such that everyone is happy (i.e.,
%the answer to the $3$-partition problem is yes (i.e., )
% if and only if . Notice that the constraints on the element sizes imply that every ${\S}_k$ in \eqref{partition} must contain exactly three elements from $\A.$

  %Next, we prove that the optimal value of the constructed instance of problem \eqref{utility} is $B$ if and only only if the given $3$-partition problem has a yes answer.
  % the $3$-partition problem has a yes answer, then
  We first show that if the $3$-partition problem has a yes answer, then the constructed problem \eqref{construct} is feasible.
   Suppose that $\A=\left\{a_1,a_2,...,a_{3K}\right\}$ can be partitioned into $K$ disjoint sets ${\cal S}_1,~{\cal S}_2,...,{\cal S}_K$ such that \eqref{partition} is true. %, for each $k\in \K,$ \begin{equation}\label{partition}\sum_{n\in{\cal S}_k}a_n=B.\end{equation}
   Then we can construct a feasible power allocation as follows: for each $k\in\K,$
  \begin{equation}\label{power}
  p_{k}^n=\left\{\begin{array}{lr}
       1, \quad \text{if}~n\in{\cal S}_k;\\[3pt]
       0, \quad \text{if}~n\notin{\cal S}_k.
    \end{array}\right.
  \end{equation}It is easy to see that the above power allocation vector \eqref{power} satisfies \eqref{ofdma} and $1\geq p_k^n\geq 0$ for all $k\in\K$ and $n\in\N$ in \eqref{construct}. Now, let us check the first condition of \eqref{construct}: for each $k\in\K,$
  \begin{equation*}\label{proof}\begin{array}{rl} &\displaystyle~~~\sum_{n\in\N}\log_2\left(1+{\left(2^{a_n}-1\right)p_k^{n}}\right)\\
  %&=\displaystyle\sum_{n\in\N}\log_2\left(1+\frac{\alpha^np_k^n}{\eta^n}\right)\\[10pt]
                                                     &=\displaystyle\sum_{n\in{\S}_k}\log_2\left(1+{\left(2^{a_n}-1\right)}\right) ~\text{(from \eqref{power})}\\
                                                     &={\displaystyle\sum_{n\in{\S}_k}a_n=B}.~\text{(from \eqref{partition})}\\[15pt]%~\text{(from \eqref{a})}
                                                     \end{array}\end{equation*}
  So, \eqref{power} is a feasible solution for problem \eqref{construct}.

  On the other hand, we show that if the constructed problem \eqref{construct} is feasible, then the answer to the $3$-partition problem is yes. Suppose that $\left\{\hat p_k^n\right\}$ is a power allocation vector which satisfies all conditions in problem \eqref{construct}. Clearly, we have% i.e.,%. Then,
  \begin{equation}\label{eq1}\sum_{n\in\N}\log_2\left(1+\left(2^{a_n}-1\right)\hat p_k^n\right)\geq B,~k\in\K.\end{equation}
  %\begin{equation}\label{eq0.5}1\geq \hat p_k^n\geq 0,~k\in\K,~n\in\N,\end{equation}
%  \begin{equation}\label{eq0}\hat p_k^n\hat p_j^n=0,~\forall~j\neq k,{~k,\,j\in\K,}~n\in\N.\end{equation}
 %and so
% \begin{equation}\label{eq2}\sum_{k\in\K}\sum_{n\in\N}\log_2\left(1+\left(2^{a_n}-1\right)\hat p_k^n\right)\geq KB.\end{equation}
 Consider the sum-rate maximization problem %under the same setting, i.e.,
   \begin{equation}\label{sum0}
 \begin{array}{cl}
\displaystyle \max_{\{p_k^n\}} & \displaystyle \frac{1}{K}\sum_{k\in\K}\sum_{n\in\N}\log_2\left(1+\left(2^{a_n}-1\right) p_k^n\right) \\
\text{s.t.} & \eqref{ofdma}~\text{and}~1\geq p_k^n\geq 0,k\in\K,n\in\N.%\quad & R_{k}\geq \gamma_{k},~k\in\K \\[10pt]
    %%& \displaystyle \sum_{k\in\K}\sum_{n\in\N}p_k^{n}\leq 3K, \\[12pt]
%    & 1\geq p_k^n\geq 0,~k\in\K,~n\in\N,\\
%    %& \bar p_k^n\geq p_k^n\geq 0,~k=1_x,2_x,...,K_x,~n=1_z,2_z,...,K_z\\[10pt]
%    & p_k^np_j^n=0,~\forall~j\neq k,{~k,\,j\in\K,}~n\in\N.%\label{cons}
    \end{array}
\end{equation}Noticing that the channel gains from the transmitter to all receivers on each subcarrier are same in \eqref{sum0}, one can simply verify the following two useful facts:
 \begin{itemize}
   \item \textbf{Fact 1:} The optimal value of problem \eqref{sum0} is equal to $B;$\vspace{0.1cm}
   \item \textbf{Fact 2:} To achieve the optimal value $B$ of problem \eqref{sum0}, the total transmission power should be equal to $3K.$
 \end{itemize}
Combining \eqref{eq1} with \textbf{Fact 1}, we obtain
\begin{equation}\label{neweq1}\sum_{n\in\N}\log_2\left(1+\left(2^{a_n}-1\right)\hat p_k^n\right)=B,~k\in\K.\end{equation}
This, together with \textbf{Fact 2} and \eqref{ofdma}, implies \begin{equation}\label{neweq2}\hat p_k^n\in\left\{0,1\right\},~k\in\K,~n\in\N,\end{equation}and \begin{equation}\label{neweq3}\sum_{k\in\K}\sum_{n\in\N}\hat p_k^n=3K.\end{equation}% for each $$ and $n\in\N.$
By defining
$${\S}_k=\left\{n\,|\,\hat p_k^n=1\right\},~k\in\K,$$ and using \eqref{neweq1}, \eqref{neweq2}, and \eqref{neweq3}, we immediately have \eqref{partition}, which shows the $3$-partition problem has a yes answer.
%Then, follows from \eqref{eq3} that
%$$$$
%
%is a partition that satisfies \eqref{partition}.

%From \eqref{proof} and \eqref{eq3} that the other direction also holds true.
%
%
%
% we must have $\hat p_k^n=1$ if $\hat p_k^n\neq 0.$
%
% then the given $3$-partition problem has a yes answer.

 Since the $3$-partition problem is strongly NP-complete\cite{Complexitybook}, we conclude that checking the feasibility of problem \eqref{problem} is NP-hard. Therefore, the optimization problem \eqref{problem} is NP-hard.\end{proof}

{Remark: The NP-hardness of problem \eqref{problem} has been shown in a recent paper \cite{add1}. In this letter, we provide a new proof, which is much simpler than the one in \cite{add1}. More important, the new proof can be directly extended to show the NP-hardness of problem \eqref{utility} with $H=H_2,H_3.$  %Next, we show the NP-hardness of problem \eqref{utility} with $H=H_2,H_3,H_4.$ %is also NP-hard.
}

%
%\begin{figure*}[!t]
%\xiaowuhao{
%{\begin{equation}\label{algopt}
%    \sum_{n\in\N}\sum_{k\in\K} {\log_2}\left(1+\frac{\alpha_{k}^n}{\eta_k^n}p_k^{n}\right)\overset{(a)}{\leq} \sum_{n\in\N}\sum_{k\in\K} {\log_2}\left(1+\frac{\alpha^n}{\eta^n}p_k^{n}\right)\overset{(b)}{\leq} \sum_{n\in\N}\log_2\left(1+\frac{\alpha^n}{\eta^n}(p^{n})^*\right)\overset{(c)}{=} \sum_{n\in\N}\sum_{k\in\K}{\log_2}\left(1+\frac{\alpha_{k}^n}{\eta_k^n}\left(p_k^{n}\right)^*\right)
%\end{equation}}}
%\hrulefill \vspace*{4pt}
%\end{figure*}

 \begin{dingli}\label{thm-hard2}
   Given any constant $c>1,$ the system utility maximization problem \eqref{utility} with $H=H_2,H_3,H_4$ are all NP-hard when $N/K=c$.
 \end{dingli}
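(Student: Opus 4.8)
The plan is to reuse the gadget from the proof of Theorem~\ref{thm-power} as the core of a reduction, embed it in the utility maximization problem \eqref{utility}, and then pad the instance so that the prescribed ratio $N/K=c$ is met exactly. The unifying observation for $H_2,H_3,H_4$ is that each satisfies $H\le \frac{1}{K}\sum_{k\in\K}R_k$, with equality if and only if all the $R_k$ are equal: this is AM--GM for $H_2$, the harmonic-versus-arithmetic mean inequality for $H_3$, and the trivial min-versus-average bound for $H_4$. Combining this with the per-subcarrier cap $p_k^n\le 1$ together with Fact~1 and Fact~2 from the proof of Theorem~\ref{thm-power}, I obtain $\frac{1}{K}\sum_{k\in\K}R_k\le B$, hence $H\le B$, with $H=B$ exactly when every $R_k=B$. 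I then show that the allocation realizing all $R_k=B$ (subject to the OFDMA constraint \eqref{ofdma} and the total power budget) exists if and only if the underlying $3$-partition instance is a yes-instance, so that the optimal value of \eqref{utility} equals $B$ iff the instance is a yes-instance. Since deciding this would decide $3$-partition, NP-hardness follows.

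The genuinely new ingredient relative to Theorem~\ref{thm-power} is producing an arbitrary ratio $c$, and this is the step I expect to be the main obstacle. Fix a rational $c=p/q>1$ with $\gcd(p,q)=1$ (rationality is forced by $N/K=c$). Given a $3$-partition instance $\A=\{a_1,\dots,a_{3m}\}$ with target $B$ and $m$ groups, I reproduce the core of \eqref{construct} on $m$ receivers and $3m$ subcarriers with $\alpha_k^n=2^{a_n}-1$, $\eta_k^n=1$, $P^n=1$, and then adjoin $d_r$ dummy receivers and $d_s$ dummy subcarriers with $d_s\ge d_r$. Each dummy receiver is assigned one dedicated dummy subcarrier of gain $2^{B}-1$ (and zero gain to every other receiver), so that it attains rate exactly $B$ using power $1$; the remaining $d_s-d_r$ dummy subcarriers carry zero gain to everyone, and all cross-block gains are zero so that the real and dummy rates decouple. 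The total power budget is set to $P=3m+d_r$. Then $N/K=(3m+d_s)/(m+d_r)$, and solving $q(3m+d_s)=p(m+d_r)$ for nonnegative integers with $d_s\ge d_r$ is always possible because $\gcd(p,q)=1$; taking the smallest such solution gives $d_r,d_s=O(m)$ for fixed $c$. Because $3$-partition is strongly NP-complete, $a_n$ and $B$ are polynomially bounded, so $2^{a_n}$ and $2^{B}$ have polynomially many bits and the entire construction is polynomial.

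It remains to verify the two directions on the padded instance. If the $3$-partition instance is a yes-instance, I power every real subcarrier and every dedicated dummy subcarrier at level $1$, using total power $3m+d_r=P$: by \eqref{partition} each real receiver attains $R_k=B$ and each dummy attains $B$, so $H=B$. Conversely, if $H=B$, the equality conditions above force all $R_k=B$; the dummies permit this automatically, while on the real block $\sum_{\text{real}}R_k=mB$ is the maximum sum-rate, so by Fact~2 all real subcarriers are used at power $1$, and requiring each real $R_k=B$ under \eqref{ofdma} and integrality of the allocation (exactly as in the derivation of \eqref{neweq2} and \eqref{neweq3}) yields sets $\S_k$ obeying \eqref{partition}. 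Hence the optimum of \eqref{utility} equals $B$ if and only if the instance is a yes-instance, which proves NP-hardness of \eqref{utility} for $H=H_2,H_3,H_4$ at every constant $c>1$.
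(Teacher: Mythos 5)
Your proposal is correct and takes essentially the same route as the paper: the same $3$-partition gadget from Theorem~\ref{thm-power}, the mean-inequality chain $H_4\le H_3\le H_2\le H_1$ with equality if and only if all rates coincide, and \textbf{Fact 1}/\textbf{Fact 2} to pin the optimal utility at $B$ exactly when the partition exists. The only differences are presentational: you explicitly carry out the dummy-receiver/subcarrier padding for general $c=p/q$ (which the paper relegates to a footnote citing \cite{multi-user} and a ``without loss of generality'' reduction to $c=3$), and you fold $H_4$ into the unified equality-condition argument, whereas the paper obtains the $H_4$ case directly from Theorem~\ref{thm-power}.
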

 \begin{proof}
   {Without loss of generality,} consider the case $c=3.$ For any given instance of the $3$-partition problem, we construct the same downlink OFDMA system as in the proof of Theorem \ref{thm-power} and set $P=3K$ in problem \eqref{utility}.

   Theorem \ref{thm-power} directly implies that the min-rate maximization problem
   \begin{equation}%{cl}
 \begin{array}{cl}
\displaystyle \max_{\{p_k^n\}} & \displaystyle H_4(R_1,R_2,...,R_K)\nonumber \\%[9pt]
\text{s.t.}  & \displaystyle \sum_{k\in\K}\sum_{n\in\N}p_k^{n}\leq 3K, \\%[13pt]
             & \eqref{ofdma}~\text{and}~1\geq p_k^n\geq 0,k\in\K,n\in\N.
%    & 1\geq p_k^n\geq 0,~k\in\K,~n\in\N, \\%[5pt]
%    & p_k^np_j^n=0,~\forall~j\neq k,{~k,\,j\in\K,}~n\in\N
    \end{array}
\end{equation}
is NP-hard, since the problem of checking whether its optimal value is greater than or equal to $B$ is NP-hard.

Consider the sum-rate maximization problem under the same setting, i.e.,
   \begin{equation}\label{sum}
 \begin{array}{cl}
\displaystyle \max_{\{p_k^n\}} & \displaystyle H_1(R_1,R_2,...,R_K) \\%[12pt]
\text{s.t.} & \displaystyle \sum_{k\in\K}\sum_{n\in\N}p_k^{n}\leq 3K, \\%[13pt]
             & \eqref{ofdma}~\text{and}~1\geq p_k^n\geq 0,k\in\K,n\in\N.
    \end{array}
\end{equation}We know from \textbf{Fact 1} and \textbf{Fact 2} in the proof of Theorem \ref{thm-power} that the optimal value of problem \eqref{sum} is equal to $B.$ %In fact,% since the subcarriers are not receiver-selective in this case (cf. \eqref{a}),
%~it follows from \eqref{a} that $$p_1^n=1,~p_j^n=0,~j\in\K\setminus 1,~n\in\N$$ is one solution to problem \eqref{sum}, and
%$$\frac{1}{K}\sum_{k\in\K} R_{k}=\frac{1}{K}\sum_{n\in\N}\log_2\left(1+\frac{\alpha_{1}^n}{\eta_1^n}\right)=B.$$

%problem \eqref{sum} is equivalent to the following problem
%   \begin{equation}\label{sum2}
% \begin{array}{rl}
%\displaystyle \max_{\{p^n\}} & \displaystyle \frac{1}{K}\sum_{n\in\N}\log_2\left(1+{\left(2^{a_n}-1\right)p^{n}}\right) \\[12pt]
%\text{s.t.} %\quad & R_{k}\geq \gamma_{k},~k\in\K \\[10pt]
%    & \displaystyle \sum_{n\in\N}p^{n}\leq 3K, \\[12pt]
%    & 1\geq p^n\geq 0,~k\in\K,~n\in\N.
%    \end{array}
%\end{equation}

Now, consider the utility maximization problem \eqref{utility} with $H=H_2$ and $H=H_3$ under the same setting, i.e.,
\begin{equation}\label{harmonic}
 \begin{array}{cl}
\displaystyle \max_{\{p_k^n\}} & \displaystyle H_2(R_1,R_2,...,R_K)\text{~or~}H_3(R_1,R_2,...,R_K) \\%[8pt]
\text{s.t.} %\quad & R_{k}\geq \gamma_{k},~k\in\K \\[10pt]
    &  \displaystyle \sum_{k\in\K}\sum_{n\in\N}p_k^{n}\leq 3K, \\%[13pt]
             & \eqref{ofdma}~\text{and}~1\geq p_k^n\geq 0,k\in\K,n\in\N.
    \end{array}
\end{equation}
 Notice that for all $R_1$, $R_2$, \ldots, $R_K\geq 0,$
\begin{align*}&H_4(R_1,R_2,...,R_K)\leq H_3(R_1,R_2,...,R_K)\\
\leq ~& H_2(R_1,R_2,...,R_K)\leq H_1(R_1,R_2,...,R_K)\end{align*}
 and the equalities hold if and only if $R_1=R_2=\cdots=R_K.$ Therefore, the optimal value of problem \eqref{harmonic} is greater than or equal to $B$ if and only if the answer to the $3$-partition problem is yes. This implies the NP-hardness of problem {\eqref{utility} with $H=H_2$ and $H=H_3$}.
 \end{proof}%$H_2(R_{1}^*,R_{2}^*,...,R_{K}^*)$ or $H_3(R_{1}^*,R_{2}^*,...,R_{K}^*)$

\subsection{Easy Cases}\label{sub2}
In this subsection, we identify some polynomial time solvable subclasses of problems \eqref{problem} and \eqref{utility}. %We have the following theorem.
\begin{dingli}\label{thm-easy}
1) The system utility maximization problem \eqref{utility} with $H=H_1$ is polynomial time solvable.

2) The system utility maximization problem \eqref{utility} is polynomial time solvable when there is only a single receiver.

3) The total power minimization problem \eqref{problem} is polynomial time solvable when either there is only a single receiver or the number of subcarriers is equal to the number of receivers.
   %The total power minimization problem \eqref{problem} is polynomial time solvable when .
  \end{dingli}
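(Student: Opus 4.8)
The plan is to treat the three parts separately, with parts 1, 2, and the single-receiver case of part 3 resting on convexity together with the separable structure of the rate, while the remaining case of part 3 reduces to a classical combinatorial matching problem.

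For part 1, I would exploit two facts: the sum-rate objective $H_1$ is separable across subcarriers, and $\log_2(1+(\alpha_k^n/\eta_k^n)p)$ is monotonically increasing in the gain-to-noise ratio $\alpha_k^n/\eta_k^n$. Starting from any feasible allocation, on each subcarrier $n$ I would reassign whatever power is used to the receiver $k^*(n)\in\arg\max_{k\in\K}\alpha_k^n/\eta_k^n$ with the largest gain-to-noise ratio; the OFDMA constraint \eqref{ofdma} guarantees at most one active receiver per subcarrier, so this reassignment is always admissible, leaves the total and per-subcarrier power unchanged, and never decreases $H_1$. Hence the optimal value of \eqref{utility} with $H=H_1$ coincides with that of the single-variable problem of maximizing $\frac{1}{K}\sum_{n\in\N}\log_2(1+\mu_n p^n)$, where $\mu_n=\max_{k\in\K}\alpha_k^n/\eta_k^n$, over the polytope $\{0\le p^n\le P^n,~\sum_{n\in\N}p^n\le P\}$. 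This is a convex program with a concave objective and linear constraints, solvable in polynomial time (e.g.\ by a capped water-filling procedure or an interior-point method), and the original variables are recovered immediately from $k^*(n)$.

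For part 2 and the single-receiver case of part 3, the assumption of a single receiver makes \eqref{ofdma} vacuous (there is no index $j\neq k$), and each of the utilities $H_1,\dots,H_4$ collapses to $R_1$. Thus part 2 becomes the maximization of the concave function $R_1=\sum_{n\in\N}\log_2(1+(\alpha_1^n/\eta_1^n)p^n)$ over the polytope $\{0\le p^n\le P^n,~\sum_{n\in\N}p^n\le P\}$, a convex program; and the single-receiver case of part 3 is the minimization of the linear objective $\sum_{n\in\N}p^n$ subject to the constraint $R_1\ge\gamma_1$, whose super-level set is convex because $R_1$ is concave, together with the box constraints $0\le p^n\le P^n$. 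Both are convex and hence polynomial-time solvable.

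The remaining and most delicate case is part 3 with $N=K$. Here I would first argue that the constraints rigidly force the subcarrier assignment to be a bijection: since $\gamma_k>0$, each receiver needs $R_k>0$ and therefore at least one subcarrier carrying positive power, while \eqref{ofdma} allows each subcarrier to serve at most one receiver; with exactly $K$ subcarriers and $K$ receivers this is possible only if each receiver is matched to exactly one subcarrier and every subcarrier is used. Fixing such a permutation $\pi$, the cheapest way to meet receiver $k$'s target on subcarrier $\pi(k)$ is $p_k^{\pi(k)}=(2^{\gamma_k}-1)\eta_k^{\pi(k)}/\alpha_k^{\pi(k)}$, which is admissible precisely when it does not exceed $P^{\pi(k)}$. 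The power-minimization problem \eqref{problem} therefore reduces to a minimum-cost perfect matching in a bipartite graph on receivers and subcarriers with edge cost $c_{kn}=(2^{\gamma_k}-1)\eta_k^n/\alpha_k^n$ when $c_{kn}\le P^n$ and $c_{kn}=+\infty$ otherwise, a linear assignment problem solvable in polynomial time by the Hungarian algorithm. I expect this last reduction to be the main obstacle: one must verify that the OFDMA and positivity-of-rate constraints pin down a one-to-one assignment, and that the induced combinatorial problem is exactly a tractable min-cost matching rather than something harder.
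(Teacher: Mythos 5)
Your proposal is correct, and for parts 1) and 2) it is essentially the paper's own argument: the paper likewise assigns each subcarrier $n$ to a receiver $\pi(n)\in\arg\max_k \alpha_k^n/\eta_k^n$, reduces the problem to the single-user power allocation \eqref{uK=1} over $\{0\le p^n\le P^n,\ \sum_n p^n\le P\}$, and certifies optimality by the same domination argument you use (formalized there as the inequality chain \eqref{algopt}); the only cosmetic difference is that the paper solves the resulting concave program exactly by an extended water-filling routine in $O(N\log_2 N)$ operations, giving an overall complexity of $O(NK+N\log_2 N)$, whereas you invoke generic convex programming (your ``capped water-filling'' is the same procedure). Where you genuinely add something is part 3): the paper gives no proof at all, deferring to the argument in its reference [18] (the multi-user OFDMA paper). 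Your treatment — convexity of the single-receiver power-minimization problem, and for $N=K$ the pigeonhole observation that $\gamma_k>0$ plus the OFDMA constraint \eqref{ofdma} forces a bijective receiver-to-subcarrier assignment, reducing \eqref{problem} to a linear assignment problem with costs $c_{kn}=(2^{\gamma_k}-1)\eta_k^n/\alpha_k^n$ (set to $+\infty$ when $c_{kn}>P^n$) solvable by the Hungarian method — is precisely the kind of argument being outsourced, and it is complete and correct as you state it. So your write-up is, if anything, more self-contained than the paper's on this point; the only caveat worth keeping in mind is the standard one that ``polynomial-time solvable'' for the convex subproblems means exact solvability here only because closed-form/water-filling solutions exist, which is the route the paper takes explicitly.
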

\begin{proof}
  1). We propose the following two-stage (subcarrier allocation stage and power allocation stage) polynomial time algorithm for solving the sum-rate maximization problem \begin{equation}\label{sumgeneral}
 \begin{array}{cl}
\displaystyle \max_{\{p_k^n\}} & \displaystyle \sum_{k\in\K} \sum_{n\in\N}{\log_2}\left(1+\frac{\alpha_{k}^n}{\eta_k^n}p_k^{n}\right) \\%[14pt]
\text{s.t.} & \displaystyle \eqref{subcarriercons}, \eqref{ofdma},~\text{and}~\sum_{k\in\K}\sum_{n\in\N}p_k^{n}\leq P.
    \end{array}
\end{equation}
\begin{center} \framebox{
\begin{minipage}{15.5cm}
\flushright
\begin{minipage}{15.5cm}
\centerline{\bf A Two-Stage Polynomial Time Algorithm for Problem \eqref{sumgeneral}}\vspace{0.1cm}
%\centerline{\bf Problem \eqref{max-min}}
\begin{enumerate}
\item [\textbf{S1.}]  Subcarrier allocation: for each subcarrier $n\in\N,$
let
{{\begin{equation}\label{association}\!\!\!\pi(n)=\arg\max\left\{\frac{\alpha_{1}^n}{\eta_1^n},\frac{\alpha_{2}^n}{\eta_2^n},...,\frac{\alpha_{K}^n}{\eta_K^n}\right\}\footnote{Such $\pi(n)$ might not be unique, and if so, we choose any one of them.},~\frac{\alpha^n}{\eta^n}=\frac{\alpha_{\pi(n)}^n}{\eta_{\pi(n)}^n}.\end{equation}}}
%and define (for simplicity of notations)
%$$\frac{\alpha^n}{\eta^n}=\frac{\alpha_{k(n)}^n}{\eta_{k(n)}^n},~n\in\N.$$
\item
[\textbf{S2.}] Power allocation: solve problem%use the extended water-filling algorithm \cite{multi-user} to
{{\begin{equation}\label{uK=1}
\begin{array}{rl}
\displaystyle \left\{(p^n)^*\right\}=\arg\max_{\{p^n\}} & \displaystyle \sum_{n\in\N}\log_2\left(1+\frac{\alpha^n}{\eta^n}p^{n}\right)\\[15pt]%\label{obj}
\text{s.t.} & \displaystyle \sum_{n\in\N}p^n\leq P, \\
    %& \sum_{n=1}^Np_k^{n}\leq \bar p_{k}, \\[10pt]
    &  P^n\geq p^n\geq 0,~n\in\N.
    \end{array}\end{equation}}}
\item [\textbf{S3.}] Output the optimal solution to problem \eqref{sumgeneral}:
{{\begin{equation}\label{solution}
{\left(p_{k}^n\right)^*}=\left\{
\begin{array}{cl}
(p^n)^*,&\mbox{if $k=\pi(n);$}\\[3pt]
0,&\mbox{otherwise,}
\end{array}\right.,~n\in\N.
\end{equation}}}
\end{enumerate}
\end{minipage}
\end{minipage}
}
\end{center}

The relation \eqref{association} indicates that the receiver with the best channel condition will be served on each subcarrier. Define $$\N_k=\left\{n\,|\,k=\pi(n)\right\},~k\in\K.$$ Then, the set of subcarriers $\N$ is optimally partitioned into $K$ nonoverlapping groups $\left\{\N_k\right\}_{k=1}^K$ (the transmitter will transmit signals to receiver $k$ on subcarriers in $\N_k$). The inequality \begin{equation}\label{algopt}
\begin{array}{rl}
    \displaystyle\sum_{n\in\N}\sum_{k\in\K} {\log_2}\left(1+\frac{\alpha_{k}^n}{\eta_k^n}p_k^{n}\right)\overset{(a)}{\leq} &\displaystyle\sum_{n\in\N}\sum_{k\in\K} {\log_2}\left(1+\frac{\alpha^n}{\eta^n}p_k^{n}\right)\\
    \overset{(b)}{\leq} & \displaystyle\sum_{n\in\N}\log_2\left(1+\frac{\alpha^n}{\eta^n}(p^{n})^*\right)\\
    \overset{(c)}{=} & \displaystyle\sum_{n\in\N}\sum_{k\in\K}{\log_2}\left(1+\frac{\alpha_{k}^n}{\eta_k^n}\left(p_k^{n}\right)^*\right)
    \end{array}
\end{equation} shows that $\left\{\left(p_k^n\right)^*\right\}$ returned by the proposed algorithm is globally optimal to problem \eqref{sumgeneral}, where $\left\{p_k^n\right\}$ is any feasible point of problem \eqref{sumgeneral}, $(a)$ is due to \eqref{association}, and $(b)$ is due to \eqref{uK=1} and the OFDMA constraint (at most one of $\left\{p_k^n\right\}_{k\in\K}$ is positive for any $n\in\N$), and $(c)$ is due to the construction of $\left(p_k^n\right)^*$ in \eqref{solution}.

%{Some of $\left\{\N_k\right\}_{k=1}^K$ might be empty.}%  in an optimal fashion
%If $\N_k$ is empty, then that means $R_k=0.$

Now, we show the polynomial time complexity of the proposed algorithm. It is obvious to see that the subcarrier allocation step \textbf{S1} takes $KN$ comparison operations. Moreover, we know from {\cite[Section IV-B]{multi-user}} that problem \eqref{uK=1} in the power allocation step \textbf{S2} can be solved in $N\log_2(N)$ operations by the extended water-filling algorithm. Therefore, {the worst-case complexity of the proposed algorithm is $O(NK+N\log_2(N)).$} %operations.

 2). When there is only a single receiver ($K=1$) in the system, all the four system utility functions coincide and problem \eqref{utility} becomes problem \eqref{uK=1}. Therefore, problem \eqref{utility} with $K=1$ can be solved in $N\log_2(N)$ operations.

 3). Case 3) is a generalization of the {results} in \cite{multi-user} where polynomial time solvability of problem \eqref{problem} for the multi-user OFDMA system is studied. Case 3) can be proved by a similar argument as in \cite{multi-user}.% and thus the details are omitted here.
\end{proof}

{Table I summarizes the complexity status of the joint subcarrier and power allocation problems \eqref{problem} and \eqref{utility} for different scenarios.}

\begin{table*}[!ht]
%\makeatletter\def\@captype{table}\makeatother
%\tabcolsep 5mm
%\renewcommand{\arraystretch}{1.3}
{\caption{\textsc{Summary of the Complexity status of the Joint Subcarrier and Power Allocation Problem}}}
\label{table} \centering
\begin{tabular}[h]{|c|c|c|c|}
\hline
  \hline
        {\backslashbox{Scenario}{Problem}} &
{Total Power Min.} &  {Sum-Rate Max. ($H=H_1$)} &  {Utility Max. ($H=H_2,H_3,H_4$)}
\\\hline
        {Multi-User IC with Fixed $N>2$}  &   {NP-hard \cite{complexity}}   &    {NP-hard \cite{complexity}}      &     {NP-hard \cite{complexity}}
\\\hline
        {Multi-User OFDMA with $N/K>1$} &  {NP-hard\cite{multi-user}}   &        {NP-hard\cite{hayashi,multi-user}}    &    {NP-hard\cite{multi-user}}       \\\hline
        {Cellular Downlink OFDMA with $N/K>1$} &  {NP-hard\cite{add1}}   &        {Poly.\ Time Solvable (Theorem \ref{thm-easy})}    &    {NP-hard (Theorem \ref{thm-hard2})}\\\hline
        \hline
\end{tabular}%\label{Xingzhi}
\end{table*}

\section{Concluding Remarks}\label{sec-conclusion}
%This paper mainly studies the computational challenges of .
In this letter, we have shown that the joint subcarrier and power allocation problem for the cellular downlink OFDMA system is generally NP-hard.
We have also identified some subclasses of the problem which are polynomial time solvable, such as the sum-rate maximization problem.
These complexity results reveal that the joint subcarrier and power allocation problem is intrinsically difficult to solve (except some special cases) and therefore provide valuable information to algorithm designers in directing their efforts toward those approaches that have the greatest potential of leading to useful algorithms.% of the joint subcarrier and power allocation problem.

%suggest that {there are not polynomial time algorithms which are able to solve the general joint subcarrier and power allocation problem to global optimality (unless P$=$NP), and determining an approximately optimal subcarrier and power allocation strategy is more realistic in practice.}

%we should abandon {efforts} to find globally optimal subcarrier and power allocation strategy for the general cellular downlink OFDMA system unless for some special cases.% (i.e., the case when the system utility is chosen to be sum-rate, or when there is only one user in the system). The problem is shown to be polynomial time solvable in {these} special cases.

%\section*{Acknowledgment}
%%The author wishes to thank Professor Zhi-Quan (Tom) Luo of University of Minnesota for inviting him to visit Xidian University, where Professor Luo organized a summer seminar on optimization and its applications to signal processing and communications from May 21 to August 19, 2012. This work was started from then.
%The author thanks Dr. Min Sung Kim of Standford University for drawing his attention to problem \eqref{problem}.

\end{document}